\newcommand{\bydef}{\triangleq}
\def\bydef{:=}
\def\bb0{{\mathbb{0}}}
\def\bydef{:=}
\def\bb{{\mathbf{b}}}
\def\bh{{\mathbf{h}}}
\def\b0{{\mathbf{0}}}
\def\b1{{\mathbf{1}}}
\def\bbE{{\mathbb{E}}}
\def\bydef{:=}
\def\sf0{{\mathsf{0}}}
\def\SIR{{\mathsf{SIR}}}
\def\nn{\nonumber}
\begin{document}

\newtheorem{thm}{Theorem}
\newtheorem{lemma}{Lemma}
\newtheorem{rem}{Remark}
\newtheorem{exm}{Example}
\newtheorem{prop}{Proposition}
\newtheorem{defn}{Definition}
\newtheorem{cor}{Corollary}
\def\proof{\noindent\hspace{0em}{\itshape Proof: }}
\def\endproof{\hspace*{\fill}~\QED\par\endtrivlist\unskip}
\def\bh{{\mathbf{h}}}
\def\SIR{{\mathsf{SIR}}}
\def\SINR{{\mathsf{SINR}}}

\title{Finite-Horizon Optimal Transmission Policies for Energy Harvesting Sensors}

\author{\IEEEauthorblockN{Rahul Vaze}
\IEEEauthorblockA{School of Technology and Computer Science\\
Tata Institute of Fundamental Research,
Mumbai 400005, India\\
Email:  vaze@tcs.tifr.res.in}
\and
\IEEEauthorblockN{Krishna Jagannathan}
\IEEEauthorblockA{Department of Electrical Engineering\\
IIT Madras,
Chennai 600 036, India\\
Email:  krishnaj@ee.iitm.ac.in} \thanks{Resesarch support from ITRA Grant 13X200 is gratefully acknowledged.}
}
\maketitle
\begin{abstract}
In this paper, we derive optimal transmission policies for energy harvesting sensors to maximize the utility obtained over a finite horizon. First, we consider a single energy harvesting sensor, with discrete energy arrival process, and a discrete energy 
consumption policy. Under this model, we show that the optimal finite horizon policy is a threshold policy, and explicitly characterize the thresholds, and the thresholds can be precomputed using a recursion. Next, we address the case of multiple sensors, with only one of them allowed to transmit at any given time to avoid interference, 
and derive an explicit optimal policy for this scenario as well.
\end{abstract}

\section{Introduction}
Energy harvesting is a paradigm where wireless sensor nodes have the ability to recharge their batteries from their surrounding environment, by using solar, heat or vibration energy. Due to the potential for energy harvesting nodes to significantly enhance the lifetime of a sensor network, there has been considerable interest in this paradigm; see~\cite{paradiso2005energy,sudevalayam2011energy} for surveys and examples.

Since the amount of ambient energy available for the nodes to replenish their batteries can vary unpredictably, it is important for energy harvesting nodes to make judicious use of available energy. In particular, energy harvesting nodes often have to tradeoff between transmitting at a particular time to obtain a certain utility, and saving energy for potential future use. Clearly, the characteristics of the renewable energy source has a key role to play in this tradeoff. In addition, the instantaneous utility obtained by expending a given amount of energy could also be time-varying, due to inherent importance of the data being sent, or simply due to the channel fading. We address this basic tradeoff in this paper. Specifically, we ask for an optimal energy utilization policy that maximizes the total utility obtained over a finite horizon, when the instantaneous utility is time varying, and the battery replenishes according to a random process.

\emph{Related Work:} In \cite{fu2006optimal}, the authors consider the problem of maximizing the finite horizon throughput of a transmitter sending data over a time-varying channel under a total energy constraint (for example, a fixed battery). 
Assuming that the channel state is revealed to the transmitter before each transmission attempt, the authors develop a dynamic programming
algorithm that provides an optimal policy for the case where the throughput obtained is concave in the energy spent. For the special case when the throughput obtained in linear in the energy spent, the authors derive a closed-form optimal policy.

For the energy harvesting case, maximizing a time-average utility function over an infinite horizon is considered in \cite{michelusi2012optimal}. Under a Bernoulli energy arrival, and binary energy expenditure model, the authors show that the optimal policy is of the threshold form, with the threshold values being monotonically decreasing in the energy available. However, an explicit characterisation of the thresholds was not possible. In a closely related paper \cite{sinha2012optimal}, the authors derive structural properties such as monotonicity for an infinite horizon  discounted reward Markov decision process. Similar properties are established for the finite horizon case in \cite{ZhangEH2012}.
Another recent paper \cite{wang2013simplicity} proposes computationally simple control policies based on heuristics that achieve near-optimal performance in the finite-horizon case with a finite battery. Finally, \cite{huang2011utility,sharma2010optimal} take a queue stability view of energy harvesting networks, using Lyapunov optimization techniques.

\emph{Contributions:} In the present paper, we derive optimal transmission policies for energy harvesting nodes to maximize a utility function over a finite horizon. The utility function is assumed to be a monotonic increasing function of the energy used by the node. First we consider a single node case, where the node is allowed to transmit any discrete quantum of available energy, under a general energy burst arrival model. We show that the optimal transmission policy is of the threshold form, and we also explicitly characterize the threshold values, where the thresholds can be computed using a recursion. Next, we show that for a case with more than one energy harvesting node, the same results can be derived, although the structure becomes more cumbersome.

%

\section{System Model}
We consider slotted time, and a single node that harvests energy from the environment. Let $E_k$ be the amount of energy  harvested at time slot $k.$ The node is assumed to have a finite battery of size $B$ in which it stores the harvested energy. For each time $k$, a realization of channel $h_k\ge 0,$ between the transmitter and the receiver is revealed to the transmitter, after which the node transmits using energy $F_{k}\ge 0,$ and obtains a payoff of $r_k(h_k,F_k).$ Although we can allow for any monotonically increasing payoff function, we take $r_k=\log(1+F_{k}h_{k})$. We assume that $h_{k}$ is drawn i.i.d. from some distribution $\phi$.
Let the energy available at time $k$ be denoted by $U_{k}.$ Thus,  $U_{k+1} = \min\{U_{k} + E_{k+1} - F_{k}, B\}$.

We consider the finite-horizon problem, where the expected total payoff over $n$ time slots is
$P = \bbE\{\sum_{i=1}^n \log(1+F_i h_i)\}.$ We are interested in finding the energy utilization $F_i$ that maximizes the total payoff $P,$ under the energy neutrality constraint $\sum_{i=1}^k F_i \le \sum_{i=1}^k E_i$ for every $k=1,\ldots,N$.

Under a general energy arrival and energy consumption model, the problem is difficult to solve. To see this, consider the simple case of $n=2$ time slots. At time slot $1$, given energy $E_1 \le B$, the decision is to determine the energy $F_{1}$ to transmit. Rewriting $P$ as a dynamic program, with payoff at time slot $1$ as $P_1(F_1, E_1)= 
\max_{F_1 \le E_1} r_1(h_1,F_1) +  \bbE\{r_2(h_2, (\min\{E_1 - F_1 + E_2, B\}))\} $,
where the second expectation is over the random energy arrival $E_2$. Since the expectation in the second term  depends on the carry-forward energy $E_1 - F_1$, there is no easy way to establish the optimal value of $F_1$ for general energy arrivals. Indeed, even without random energy arrivals, it is difficult to explicitly characterize the optimal policy; however, as shown in \cite[Section II E]{fu2006optimal} the problem is concave, and lends itself to fast numerical solutions. In this paper, we overcome this difficulty by restricting ourselves to discrete energy consumption and discrete burst energy arrivals. With this restriction, we shall see that an explicit characterisation becomes feasible.


%



\section{Optimal Policy for a Single Node}\label{sec:singlesource}
\subsection{Binary Energy Arrival and Consumption Model}
We first consider the case of Bernoulli energy arrivals and binary energy consumption model, i.e. $E_k, F_k\in \{0,1\}$. Let $E_k=1$ with probability $p$. For this case, we  derive a finite horizon optimal policy. With abuse of notation, for this subsection, 
we let $r_k = r_k(h_k,1)$, since $F_k \in \{0,1\}$. \begin{thm}
Under Bernoulli energy arrivals and binary energy consumption model, the optimal transmission policy is given by the following threshold rule. 
$$F_{k}^{\star} =\left\{\begin{array}{cc}1&\quad {\rm if} \quad r_k + \gamma_{k+1}^{m-1} > \gamma_{k+1}^{m},\\   0&\quad \rm otherwise,\end{array}\right.$$ where $m$ is the energy available at time slot $k.$ The thresholds are given by the following recursion:   $\gamma_n^0 = p\bbE\{r_n\}$ and $\gamma_n^i = \bbE\{r_n\}, \ i>0$, and
for $0<m\le n-k$
\begin{eqnarray*}
\gamma_{k}^m
&=& (1-p) \bbE\left\{\max\left\{ r_k + \gamma_{k+1}^{m-1}, \gamma_{k+1}^{m} \right\} \right\} \\
&& + p \bbE\left\{ \max\left\{ r_k + \gamma_{k+1}^{m}, \gamma_{k+1}^{m+1} \right\}  \right\}.
\end{eqnarray*}
For $m\ge n-k+1$,
$\gamma_{k}^{m} = \bbE\left\{ r_k\right\} + \gamma_{k+1}^{n-k}$, and for $m=0$,
$\gamma_{k}^0 = p \bbE\left\{\max\left\{ r_k + \gamma_{k+1}^{0}, \gamma_{k+1}^{1} \right\}\right\}  + (1-p )\bbE\left\{\gamma_{k+1}^{0}\right\}$.
\end{thm}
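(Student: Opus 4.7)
The plan is to prove the theorem by backward induction, identifying $\gamma_k^m$ with the value function of the natural dynamic program. Let $V_k(m)$ denote the maximum expected payoff attainable from slot $k$ through slot $n$ given that the battery holds $m$ units immediately \emph{before} the arrival $E_k$, with the terminal convention $V_{n+1}(\cdot)=0$. The target is to show $V_k(m)=\gamma_k^m$ for every $k,m$, and that the argmax in the resulting Bellman equation is achieved by the stated threshold rule.

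First I would write out the one-step Bellman equation. Conditioning on $E_k$ (equal to $1$ with probability $p$ and to $0$ with probability $1-p$), then on $h_k$, and finally maximizing over $F_k\in\{0,1\}$ subject to $F_k\le m+E_k$, one obtains for $m\ge 1$
\begin{align*}
V_k(m) &= (1-p)\,\bbE\bigl[\max\{r_k + V_{k+1}(m-1),\, V_{k+1}(m)\}\bigr] \\
&\quad + p\,\bbE\bigl[\max\{r_k + V_{k+1}(m),\, V_{k+1}(m+1)\}\bigr],
\end{align*}
while for $m=0$ the first bracket collapses to $V_{k+1}(0)$ since $F_k=1$ is infeasible with no energy on hand. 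The optimal decision $F_k^\star=\mathbf{1}\{r_k+\gamma_{k+1}^{m-1}>\gamma_{k+1}^m\}$, with $m$ interpreted as the \emph{post-arrival} energy $U_k$, is just the argmax inside each bracket, so the policy drops out of the Bellman equation once the value-function identity is established.

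Next I would verify the terminal slot and the saturation regime as the base case and as a separate sub-induction. Plugging $V_{n+1}\equiv 0$ into the Bellman equation at $k=n$ yields $\gamma_n^0=p\,\bbE\{r_n\}$ and $\gamma_n^i=\bbE\{r_n\}$ for $i\ge 1$, using that $r_n\ge 0$ makes transmitting in the last slot weakly optimal whenever any energy is available. For $m\ge n-k+1$, a short secondary induction shows that there is enough energy on hand to transmit in every remaining slot irrespective of future arrivals, so transmitting at slot $k$ is weakly optimal; then $V_k(m)=\bbE\{r_k\}+V_{k+1}(m-1)$, and the observation that $V_{k+1}(j)$ is constant for $j\ge n-k$ (all these states are themselves in the saturation regime for slot $k+1$) collapses this to $\gamma_k^m=\bbE\{r_k\}+\gamma_{k+1}^{n-k}$ as claimed.

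The main obstacle is not any deep step but rather careful bookkeeping: the subscript $m$ in the decision rule refers to the post-arrival energy, whereas the same symbol in $\gamma_k^m$ refers to the pre-arrival energy, and the saturation boundary $n-k+1$ shifts with $k$, so each case in the theorem's recursion must be matched to the correct regime. Once these conventions are pinned down, every case of the recursion is a direct instance of the Bellman equation and the inductive step reduces to pattern matching. Monotonicity of $\gamma_{k+1}^m$ in $m$, which is what makes ``threshold'' the appropriate descriptor for the policy, is not strictly required to establish the recursion, but it can be supplied by a straightforward sample-path coupling showing that additional stored energy never hurts.
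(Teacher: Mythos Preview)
Your proposal is correct and follows essentially the same route as the paper: both identify $\gamma_k^m$ with the dynamic-programming value function $\bar P_k(m)$ indexed by the pre-arrival (carry-over) energy, unroll the Bellman equation by conditioning on the Bernoulli arrival $E_k$ and maximizing over $F_k\in\{0,1\}$, and read off the threshold rule and the recursion directly. Your handling of the saturation regime $m\ge n-k+1$ and of the pre- versus post-arrival indexing is in fact more explicit than the paper's, which simply notes that with at least $k+1$ units on hand one should transmit and states the remaining cases without further comment.
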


\begin{proof}
Let $c_{k-1}, k=1,\dots,n,$ be the carried over energy from time slot $k-1$ to time slot $k$, and
$U_k = \min\{c_{k-1}+E_k, B\}$ be the total energy available at time slot $k,$ with
$c_0=0$, and $c_k = U_k -F_{k}$.

Then the optimization problem can be posed in the dynamic programming format, by writing the pay-off at time slot $k=1,\dots,n$ as $P_k(c_{k-1}, h_k)
 = \max_{F_k\in \{0,1\}, F_k \le U_k} \left[  \log(1+F_kh_k) + {\bar P}_{k+1}(c_k)\right],$
where ${\bar P}_{k+1}(x) = \bbE\{P_{k+1}(x, h_{k+1})\}$.

To find the optimal transmission policy $\{F_k\}_{k=1}^n$, lets start from the last time slot $n$, where we have ${\bar P}_{n}(c_{n-1}) =  \bbE\{\log(1+h_n\b1_{U_{n}\geq 1})\}$, where   $\b1_{U_n\geq 1}$ is the indicator that $U_n$ is at least 1. Note that ${\bar P}_{n}(x)= \bbE\{r_n\}$ for $x>0$ and  ${\bar P}_{n}(0)= p\bbE\{r_n\}$. Let $\gamma_n^0 ={\bar P}_{n}(0)$ and $\gamma_n^x = {\bar P}_{n}(x)$ for $x>0$.

Next, consider the $n-k^{th}$ time slot (i.e., there are $k$ remaining time slots), and let  $\gamma_{n-k}^m \bydef {\bar P}_{n-k}(m)$.
Then, we have $P_{n-k}(c_{n-k-1}, h_{n-k}) $
\begin{eqnarray}\nn
&=&\max_{F_{n-k}\in \{0,1\}, F_{n-k} \le U_{n-k}} \left[ \log(1+F_{n-k} h_{n-k}) \right. \nonumber\\
&&
 +  \left.{\bar P}_{n-k+1}(U_{n-k} - F_{n-k})\right],\nonumber\\
&\stackrel{(a)}{=}&  \b1_{U_{n-k}=1}\left[\max\left\{ r_{n-k} + \gamma_{n-k+1}^0, \gamma_{n-k+1}^1 \right\} \right] \nonumber\\
&& + \  \b1_{U_{n-k}=2}\left[\max\left\{ r_{n-k} + \gamma_{n-k+1}^1, \gamma_{n-k+1}^2 \right\} \right]\nonumber \\ \nn
&&\dots \\ \nn
&&  + \ \b1_{U_{n-k}=k}\left[\max\left\{ r_{n-k} + \gamma_{n-k+1}^{k-1}, \gamma_{n-k+1}^{k} \right\} \right] \nonumber\\
&&  +\ \b1_{U_{n-k}\geq k+1}\left[r_{n-k} + \gamma_{n-k+1}^k \right]\nonumber \\
&& + \b1_{U_{n-k}=0}\gamma_{n-k+1}^0.\label{eq:n-k}
%
\end{eqnarray}
To parse $(a),$ note that if at time slot $n-k$, the available energy $U_{n-k}$ is at least $k+1$, then $F_{n-k}=1$, since there are only $k$ more slots left and hence the node should transmit in time slot $n-k$ to get the payoff of $r_{n-k}+\gamma_{n-k+1}^k$. This explains the second last term. Similarly, if $c_{n-k-1}+E_{n-k} =m$, where $m <k+1$, then the choice is between using a unit energy in time slot $n-k$ or using all the $m$ units of energy in future time slots. 

In order to obtain the optimal solution $F_{n-k}$ when there are $m$ units of energy (i.e, $U_{n-k}=m$), $1\leq m\leq k$, we compare the two arguments inside the maximum of the $m$th term. In particular, we get that the optimal solution is $F_{n-k}=1$ if $r_{n-k}+\gamma_{n-k+1}^{m-1}> \gamma_{n-k+1}^{m} $, and $F_{n-k}=0$ otherwise.
In case $m=0$,
then payoff is $\gamma_{n-k+1}^0$ which is the last term of the expression.

The above expression for $P_{n-k}(c_{n-k-1}, h_{n-k}) $ also suggests a straightforward recursion for computing the thresholds $\gamma_{n-k}^j= \bbE\{P_{n-k}(j, h_{n-k})\}$. Indeed, observe that when $c_{n-k-1}=j,\ 0<j\leq k$ then $U_{n-k}=j+1$ with probability $p$ and $U_{n-k}=j$ with probability $1-p$. Thus, taking the expectation of (\ref{eq:n-k}), we get for $0<j\le k$
\begin{eqnarray*}
\gamma_{n-k}^j &=& \bbE\left\{\max\left\{ r_{n-k} + \gamma_{n-k+1}^{j-1}, \gamma_{n-k+1}^{j} \right\}\right\},\\ &=& (1-p) \bbE\left\{\max\left\{ r_{n-k} + \gamma_{n-k+1}^{j-1}, \gamma_{n-k+1}^{j} \right\} \right\} \\&& + p \bbE\left\{ \max\left\{ r_{n-k} + \gamma_{n-k+1}^{j}, \gamma_{n-k+1}^{j+1} \right\}  \right\}.
\end{eqnarray*}
Similarly, we can obtain $\gamma_{n-k}^j$ for $j=0$ and $j\ge k+1.$
\end{proof}

\subsection{Optimal policy for the general discrete case}
In this section, we consider a more general scenario, where both the energy arrival and transmitted energy can take any discrete value between $0$ and $B$. Here too, we are able to explicitly characterize the optimal finite horizon throughput maximizing policy. Assume that $i$ units of energy arrive during each slot with probability $p_i,\ i=0,1,\dots,B,$ and that this process is i.i.d. across time.
\begin{thm}
Suppose $m$ units of energy are available in slot $k,$ i.e., $U_k=m.$ Then the optimal policy is to transmit  $F^{\star}_{k}=q$ units of energy, where
$q = \arg\max_{j\in\{0,1,\dots,m\}}\log(1+jh_k)+\gamma_{k+1}^{m-j}$.
\end{thm}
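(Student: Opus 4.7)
The approach is to mirror the dynamic programming argument used in the binary case of the previous theorem, so I would first extend the notation: define $\gamma_k^m \bydef \bbE\{P_k(m, h_k)\}$ as the expected reward-to-go when $m$ units of energy are available at the start of slot $k$, with the boundary convention $\gamma_{n+1}^m = 0$ for all $m \in \{0, 1, \ldots, B\}$ (no payoff beyond the horizon). The proof then proceeds by backward induction on $k$ from $n$ down to $1$, the base case $k=n$ being trivial since with no future slots remaining the obvious choice $F_n^\star = U_n$ maximizes $\log(1+F_n h_n)$, which agrees with the claimed argmax formula because $\gamma_{n+1}^{\cdot} \equiv 0$.

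For the induction step, I would write the Bellman equation at slot $k$: conditioned on the carried-over energy $c_{k-1}$ and the current channel $h_k$, the total available energy is $U_k = \min\{c_{k-1} + E_k, B\}$, and the optimizer solves
$$P_k(c_{k-1}, h_k) = \max_{F_k \in \{0,1,\ldots,U_k\}} \bigl[\log(1 + F_k h_k) + \bar{P}_{k+1}(U_k - F_k)\bigr].$$
By the induction hypothesis, the continuation value equals $\bar{P}_{k+1}(U_k - F_k) = \gamma_{k+1}^{U_k - F_k}$, so upon setting $m := U_k$ the maximizer is precisely $q = \arg\max_{j \in \{0,\ldots,m\}} \log(1 + j h_k) + \gamma_{k+1}^{m-j}$, as claimed. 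Optimality of this choice in the original problem then follows from Bellman's principle of optimality.

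For completeness, I would also derive the recursion satisfied by the $\gamma_k^m$'s by taking expectations over the energy arrival $E_k$ (which equals $i$ with probability $p_i$) and over $h_k$: for a carried-over value $c_{k-1} = m'$,
$$\gamma_k^{m'} = \sum_{i=0}^{B} p_i \, \bbE_{h_k}\left\{\max_{j \in \{0,\ldots,M_i\}} \log(1 + j h_k) + \gamma_{k+1}^{M_i - j}\right\},$$
where $M_i \bydef \min\{m' + i, B\}$. This specializes to the Bernoulli-binary recursion of the previous theorem when $B=1$ and $p_1=p$, $p_0 = 1-p$, providing a useful sanity check, and it can be precomputed offline given the channel distribution $\phi$ and the arrival pmf $\{p_i\}$.

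The main obstacle here is essentially bookkeeping rather than any new conceptual content: the binary proof already establishes the template, and the extension to arbitrary discrete action sets $\{0,1,\ldots,B\}$ is immediate from the principle of optimality. The only genuinely delicate point is correctly handling the battery cap $B$ in the recursion through the truncation $\min\{\cdot, B\}$; unlike the binary case, the optimal action is no longer a clean threshold in $h_k$ but is described implicitly through the argmax of $m+1$ candidate values, so I would flag explicitly that the closed-form threshold characterization does not survive the generalization even though the recursive computability does.
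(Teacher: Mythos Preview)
Your proof is correct and follows essentially the same dynamic-programming argument as the paper: set up the Bellman recursion, read off the maximizer as the claimed $\arg\max$, and obtain the recursion for $\gamma_k^m$ by taking expectations over $E_k$ and $h_k$. The only cosmetic differences are your cleaner boundary convention $\gamma_{n+1}^m=0$ (the paper initializes $\gamma_n^m$ directly) and the paper's introduction of the shorthand $\alpha_{k+1}^m$ for the inner maximum; your side remark that the binary theorem is recovered at $B=1$ is slightly off, since that theorem allows arbitrary $B$ with $E_k,F_k\in\{0,1\}$, but this does not affect the proof itself.
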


\begin{proof}
Similar to the previous section,
 the optimization problem can be posed in the dynamic programming format, by writing  the payoff at time slot $k=1,\dots,n$ as 
$$
P_k(c_{k-1}, r_k) = \max_{F_k\le U_k} \left[ \log(1+F_k h_k) + {\bar P}_{k+1}(c_k)\right],$$
where ${\bar P}_{k+1}(x) = \bbE\{P_{k+1}(x, h_{k+1})\}$.

To find the optimal transmission protocol $\{F_i\}_{i=1}^n$, lets start from the last time slot $n$, where we have ${\bar P}_{n}(c_{n-1}) =  \bbE\{\log(1+U_nh_n)\}$.
Let $\gamma_n^m ={\bar P}_{n}(m) = \bbE\{\log(1+h_n\min\{m+E_n,B\})\}$, $m=0,1,\dots,B$. Let $\gamma_{k}^m \bydef {\bar P}_{k}(m)$ and define \begin{equation}\alpha_{k+1}^m\bydef \max_{j\in\{0,1,\dots,m\}}\log(1+jh_k)+\gamma_{k+1}^{m-j}.\label{defn:alpha}
\end{equation}

For the $n-k^{th}$ time slot, we have $P_{n-k}(c_{n-k-1}, r_{n-k}) $
\begin{multline}
=  \max_{F_{n-k}} \left[\log(1+F_{n-k} h_{n-k}) + {\bar P}_{n-k+1}(c_{n-k+1})\right] \\
=  \sum_{m=0}^{B} \b1_{U_{n-k}=m}\alpha_{n-k+1}^m. \label{eqn:telescopic}
\end{multline}

In order to obtain the optimal solution $F_{n-k}$, we need to look at the $m^{th}$ term of \eqref{eqn:telescopic}, where $U_{n-k}=m.$ Then, the optimal $F^{\star}_{n-k}=q$, where $q$ is the index $j$ that achieves the maximum in (\ref{defn:alpha}). To find the value of $q,$ we need to know $\gamma_{n-k+1}^j, \ j=0,\dots,m,$ which can be found as follows.

Setting $c_{n-k-1}=j$  and taking the expectation of (\ref{eqn:telescopic}), we get for $0\le j\le B$
\begin{equation}
\gamma_{n-k}^j = \sum_{i=0}^{B-j-1} p_i \bbE\left\{\alpha_{n-k+1}^{i+j} \right\}+\left(\sum_{i=B-j}^B p_i\right)\bbE\left\{\alpha_{n-k+1}^{B} \right\},
\end{equation}
where $p_i$ is the probability that $E_{n-k} = i$.

\end{proof}
\begin{rem} The thresholds $\gamma_{n-k}^{j}$'s depend only on the distribution of 
energy arrivals and channel gains and can be computed ahead of time.
\end{rem}
\begin{figure*}
\begin{equation}\label{eq:telescopic2}
\beta_{n-k}(m_1, m_2) = \max\{r^1_{n-k} + \gamma_{n-k+1}(m_1-1, m_2), r^2_{n-k} + \gamma_{n-k+1}(m_1, m_2-1),  \gamma_{n-k+1}(m_1, m_2)\}.
\end{equation}
\end{figure*}
\section{Multiple Nodes}\label{sec:multiplesources}
In this section, we extend the problem and consider two energy harvesting nodes.
For ease of notation, we consider the Bernoulli energy arrival model, where node $i$ harvests either one unit of energy, or does not harvest any energy. In particular, let $E^i_t$ be the energy harvested by node $i$ in time slot $t.$ We assume $E_i^t=1$ with probability $p_i,$ independently of each other, and independent from slot-to-slot.
At each time slot a new channel realization is given to the two nodes, and to avoid interference, at most one of two nodes is allowed to transmit. We also restrict ourselves to the unit energy consumption model.
If node $i$ transmits at time $t$, then it gets a payoff $\log(1+h^i_t)$. We consider the finite horizon problem
$$P  = \max\bbE\left\{\sum_{t=1}^n \log(1+F^1_t h^1_t) + \log(1+F^2_th^2_t)\right\},$$ such that $F^1_t + F^2_t \le1$,
where $F^i_t \in \{0,1\}$, and $F^i_t =1$ if an unit amount of energy is transmitted from node $i$ at time $t$. The objective is to maximize $P$, under the per-node energy neutrality constraint $\sum_{t=1}^k F^i_t \le \sum_{t=1}^k E^i_t$. Our next result derives an optimal policy for this case.


\begin{thm}
Suppose $m_i$ units of energy be available in slot $k$ at node $i$. Then the optimal policy is to transmit  $F^i_{k}=1, i=1,2$ if $q=i, i=1,2$, where $q$ is the index of the maximum in  (\ref{eq:telescopic2}). Otherwise, if $q=3$, then
$F^i_{k}=0, i=1,2$.
\end{thm}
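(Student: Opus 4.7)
The plan is to mimic the dynamic programming argument from the single-node case, but now with a two-dimensional energy state $(m_1, m_2)$ tracking the battery levels at the two nodes. I would define the value function $P_k(c^1_{k-1}, c^2_{k-1}, h^1_k, h^2_k)$ as the optimal expected reward-to-go from slot $k$ onward, and write $\gamma_k(m_1, m_2) \bydef \bbE\{P_k(m_1, m_2, h^1_k, h^2_k)\}$ for the pre-channel-realization value when the carried-over energies are $m_1$ and $m_2$. Starting from the terminal slot $n$, where the value is simply $\bbE\{\max\{\log(1+h^1_n)\b1_{U^1_n\geq 1}, \log(1+h^2_n)\b1_{U^2_n\geq 1},0\}\}$, I would proceed by backward induction.

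At slot $n-k$, conditioned on the available energies $U^1_{n-k} = m_1$ and $U^2_{n-k} = m_2$, there are exactly three feasible actions (provided both $m_1, m_2 \geq 1$): let node $1$ transmit and pay off $r^1_{n-k}+\gamma_{n-k+1}(m_1-1, m_2)$; let node $2$ transmit and pay off $r^2_{n-k} + \gamma_{n-k+1}(m_1, m_2-1)$; or let neither transmit and carry both batteries forward for $\gamma_{n-k+1}(m_1, m_2)$. Maximizing over the three options yields exactly $\beta_{n-k}(m_1, m_2)$ of equation (\ref{eq:telescopic2}), and the optimal action is determined by the index $q$ achieving the maximum, which is the claim of the theorem. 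The boundary cases $m_1 = 0$ or $m_2 = 0$ are handled by dropping the infeasible term(s) from the max, which is equivalent to the stated rule once we adopt the convention that infeasible options are excluded.

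To close the recursion, I would then take expectations with respect to the energy arrivals in slot $n-k$. Since $E^1_{n-k}$ and $E^2_{n-k}$ are independent Bernoulli($p_1$) and Bernoulli($p_2$), the updates
\begin{align*}
\gamma_{n-k}(j_1, j_2) &= \sum_{a_1, a_2 \in \{0,1\}} p_1^{a_1}(1-p_1)^{1-a_1}\,p_2^{a_2}(1-p_2)^{1-a_2} \\
&\qquad\cdot \bbE\{\beta_{n-k}(\min\{j_1+a_1,B\}, \min\{j_2+a_2,B\})\}
\end{align*}
allow the thresholds $\gamma_k(\cdot,\cdot)$ to be precomputed from slot $n$ backwards, analogous to the single-node recursion.

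The main obstacle is not the structural argument, which is a direct extension of the single-node Bellman recursion, but rather the case bookkeeping: unlike the single-node case, the state space is two-dimensional and there are three competing actions, so one must carefully enumerate what happens at the axes $m_i = 0$ and at the ``slack'' region where $m_1 + m_2$ exceeds the number of remaining slots (in which case saving is strictly suboptimal for at least one node, and the max collapses). Once the Bellman equation (\ref{eq:telescopic2}) is in hand, however, the optimal policy follows immediately by reading off the argmax, exactly as in the single-node argument.
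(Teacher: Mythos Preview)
Your proposal is correct and follows essentially the same route as the paper: both set up the two-dimensional Bellman recursion with value function $\gamma_{n-k}(m_1,m_2)=\bar P_{n-k}(m_1,m_2)$, enumerate the three feasible actions at slot $n-k$ to obtain $\beta_{n-k}(m_1,m_2)$ in \eqref{eq:telescopic2}, and read off the optimal $F^i_k$ from the argmax. If anything, your sketch is slightly more explicit than the paper's, since you spell out the Bernoulli-averaged recursion for $\gamma_{n-k}(j_1,j_2)$ and flag the boundary cases $m_i=0$ and the slack region $m_1+m_2$ exceeding the remaining horizon, whereas the paper simply says the coefficients are obtained ``by taking expectation of \eqref{eq:payoffmultisource}.''
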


\begin{proof}
Let $c^i_k, k=1,\dots,n,$ be the energy carried over by node $i$ from time-slot $k$ to  $k+1.$ Let  $U^i_k = \min\{c^i_{k-1}+E^i_k, B\}$ be the total energy available with node $i$ at time slot $k$ with
$c^i_0=0$, and $c^i_k = U^i_k -F^i_k$.

Then the optimization problem can be posed in the dynamic programming format, by writing  the payoff at time slot $k=1,\dots,n$ as $P_k(c^1_{k-1}, c^2_{k-1} ,h^1_k,h^2_k )$
\begin{eqnarray*}
&=& \max_{F^1_k, F^2_k \in \{0,1\}, F^1_k + F^2_k\le 1, F^i_k \le U_k^i} \left[ \log(1+F^1_t h^1_t) +  \right.\\
 &&\left. \log(1+ F^2_th^2_t) + {\bar P}_{k+1}(c^1_k, c^2_k),\right]
\end{eqnarray*}
where ${\bar P}_{k+1}(x, y) = \bbE\{P_{k+1}(x, y, h^1_{k+1}, h^2_{k+1})\}$.

To find the optimal transmission protocol $\{F^i_k\}_{k=1}^n$, lets start from the last time slot $n$, where we have ${\bar P}_{n}(c^1_{n-1},c^2_{n-1}) =  \bbE\{\max\{r^1_n\b1_{U^1_{n}\ge 0},r^2_n\b1_{U^2_n\ge 0}\}\}$, where $r^i_{n-k} = \log(1+h^i_{n-k})$. Thus, the optimal decision is $F^1_n=1, F^2_n=0$ whenever $h^1_n\b1_{U^1_{n}\ge 0}>h^2_n\b1_{U^2_n\ge 0},$ and vice-versa.

Let $\gamma_{n-k}(x,y) \bydef {\bar P}_{n-k}(x,y)$.
Consider the $n-k^{th}$ time slot, where there are $k$ more remaining time slots.
Then, we can write $P_{n-k}(c^1_{n-k-1}, c^2_{n-k-1}, h^1_{n-k}, h^2_{n-k}) $
\begin{eqnarray}
 &=&  \sum_{m_1=1}^{B} \sum_{m_1=1}^{B}  \b1_{U^1_{n-k} = m_1} \b1_{U^2_{n-k} = m_2}
 \left[ \beta_{n-k}(m_1, m_2) \right],\label{eq:payoffmultisource}
 \end{eqnarray}
where $\beta_{n-k+1}(m_1,m_2)$ is defined in (\ref{eq:telescopic2}), 
where the three terms in (\ref{eq:telescopic2}), correspond to the payoff obtained by transmitting a unit energy from node $1$, node $2$, and not transmitting any energy from any node, respectively. 
During time slot $n-k,$ if the available energy at the two nodes is $U^1_{n-k} =m_1, U^2_{n-k} = m_2, m_1, m_2 < k+1$, the optimal $F_{t}^i = 1, i=1,2$ if $q=i$, where $q$ is the
index of the maximum in  (\ref{eq:telescopic2}), and $F_{t}^i = 0, i=1,2$ if $q=3$. Finally, we can recursively compute the coefficients $\gamma_{n-k}(m_1, m_2)$ by taking expectation of  (\ref{eq:payoffmultisource}).
\end{proof}
\subsection{Sub-optimal policy for multiple nodes}\label{sec:subopt}
The energy transmitted by the optimal policy (section \ref{sec:multiplesources}) at any slot depends on thresholds $\gamma(i,j)$'s, where $i,j$ is the energy available at nodes $1$ and $2$, respectively. Thus, $B^2$ values of $\gamma(i,j)$'s have to be computed for each slot. If $B$ is large, this becomes quite significant.

A  simpler alternative  is a sub-optimal policy $\mathcal S$ that makes decoupled decisions at the two nodes. Thus, only $2B$ thresholds $\gamma$'s, $B$ for each node, have to be computed. Let us discuss the details for the binary case for ease of exposition.

Fix a given sample path for the energy arrivals to the two nodes, and for the channel gain realizations. Assume that each node ignores the other node, and independently computes the optimal policy described  in Section \ref{sec:singlesource}. Denote the decoupled decisions of node $i$ from Section \ref{sec:singlesource} at slot $k$
as $F_k(i) \in\{0,1\}$.  Our sub-optimal policy $\mathcal S$ operates as follows. During slot $k$, if $F_k(1)=F_k(2)=1$, transmit  from unit energy from node with larger channel gain, i.e., from node $i^{\star}$, where $i^{\star}=\arg\max_{i=1,2}h_k^i$. If $F_k(1)=1, F_k(2)=0$, transmit unit energy from node $1,$ and vice-versa. Finally, if  $F_k(1)=F_k(2)=0$,  do not transmit from either node.
Let us call the  optimal policy of Section \ref{sec:multiplesources} as $\mathcal O$.
\begin{figure}
\centering
\includegraphics[width=3.25in]{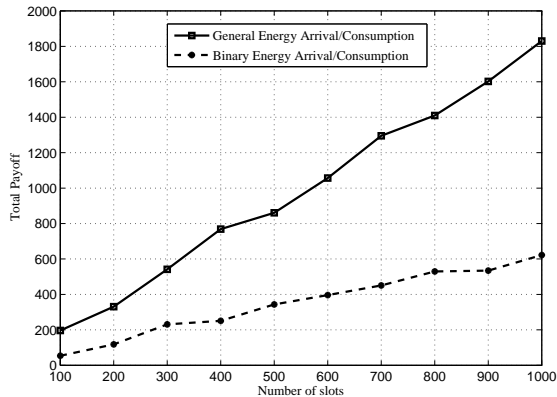}
\caption{Performance of the single node optimal policy for $B=10$. The solid curve represents optimal payoff under uniform energy arrivals in [0:10], and the dashed curve corresponds to binary energy arrivals and transmission.}
\label{fig:singlesource}
\end{figure}

\begin{figure}
\centering
\includegraphics[width=3.5in]{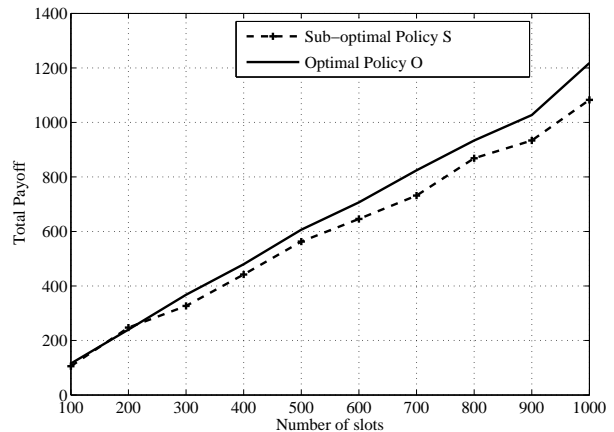}
\caption{Performance comparison of optimal policy $\mathcal O$ and sub-optimal policy $\mathcal S$ for two nodes.}
\label{fig:twosources}
\end{figure}

\begin{lemma}\label{lem:subopt}
$\mathcal S$ and $\mathcal O$ differ only when $F_k(1)=F_k(2)=0$.
\end{lemma}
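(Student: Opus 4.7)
The plan is to argue by backward induction on the time slot $k$, leveraging the three-way choice made by $\mathcal{O}$ in (\ref{eq:telescopic2}) and comparing it against the decoupled single-node decisions used by $\mathcal{S}$. At the terminal slot $k=n$, the joint policy simply transmits from whichever node has energy and the larger channel gain, which coincides with the single-node combination rule of $\mathcal{S}$; hence the three cases $(F_n(1),F_n(2))\in\{(1,0),(0,1),(1,1)\}$ are immediate. For a generic slot $k<n$, I would assume as inductive hypothesis that $\mathcal{S}$ and $\mathcal{O}$ agree on all subsequent decisions whenever at least one node is scheduled by the decoupled rule, and then handle the three informative cases at slot $k$ separately.

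The central structural ingredient, which I would also establish by a parallel backward induction, is a submodularity property of the joint value function. Writing $\eta^i_k(m)$ for the single-node threshold values from Theorem 1 when applied to node $i$ alone, the claim is that for each $i\in\{1,2\}$,
\[
\gamma_{k+1}(m_1,m_2) - \gamma_{k+1}(m_1-\delta_{i,1},\,m_2-\delta_{i,2}) \;\le\; \eta^i_{k+1}(m_i) - \eta^i_{k+1}(m_i-1).
\]
Intuitively, an extra unit of energy at node $i$ is worth less in the joint problem than in the single-node problem, because the other node's reserves already offer partial insurance against future demand. This bound is the workhorse of the case analysis.

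Given the submodularity bound, I would dispatch the three non-trivial cases as follows. If $F_k(1)=1$ and $F_k(2)=0$, Theorem 1 yields $r^1_k \ge \eta^1_{k+1}(m_1) - \eta^1_{k+1}(m_1-1)$, and the submodularity inequality upgrades this to $r^1_k \ge \gamma_{k+1}(m_1,m_2) - \gamma_{k+1}(m_1-1,m_2)$, so the ``transmit from node 1'' term dominates the ``idle'' term in (\ref{eq:telescopic2}); a symmetric argument in the reverse direction for node 2 (whose non-transmission means $r^2_k$ is below its single-node gap) rules out the ``transmit from node 2'' term. Hence $\mathcal{O}$ picks node 1, matching $\mathcal{S}$; the mirror case $F_k(1)=0,F_k(2)=1$ is identical. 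In the contention case $F_k(1)=F_k(2)=1$, both transmission options beat idling, and the comparison between them reduces, once the submodularity bounds are applied in both directions, to comparing $r^1_k$ and $r^2_k$, recovering the ``larger channel gain wins'' rule of $\mathcal{S}$.

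The main obstacle is establishing the submodularity bound rigorously under possibly asymmetric arrival rates $p_1\neq p_2$: the joint recursion couples the two nodes through a three-way maximum, so the inductive step must carefully track how that maximum responds to a single-unit perturbation of $m_1$ or $m_2$. The contention case is also the most delicate, because the cross-term $\gamma_{k+1}(m_1-1,m_2) - \gamma_{k+1}(m_1,m_2-1)$ need not vanish; verifying that it is controlled by the single-node threshold gaps, so that the $r^1_k$ versus $r^2_k$ comparison remains decisive, is where the structural bound is used most aggressively.
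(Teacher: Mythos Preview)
Your route is genuinely different from the paper's. The paper gives a short sample-path interchange argument: if $F_k(1)=1$ but $\mathcal{O}$ does not transmit from node~1 at slot~$k$, then $\mathcal{O}$ is deferring that unit of node-1 energy to some later slot~$\ell$; since the single-node optimum for node~1 already preferred slot~$k$ to any feasible deferral, swapping the transmission back to slot~$k$ cannot lose, and the case where $\mathcal{O}$ instead uses node~2 at slot~$k$ is handled similarly. No value-function inequalities are invoked. You, by contrast, try to compare the joint value function $\gamma_{k+1}(\cdot,\cdot)$ to the decoupled ones via a submodularity-type bound and then read off the maximizer in~(\ref{eq:telescopic2}).

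The difficulty is that your submodularity inequality is one-sided, and that side is only useful for the node that \emph{is} transmitting. From $F_k(1)=1$ you correctly deduce
\[
r^1_k \;\ge\; \eta^1_{k+1}(m_1)-\eta^1_{k+1}(m_1-1) \;\ge\; \gamma_{k+1}(m_1,m_2)-\gamma_{k+1}(m_1-1,m_2),
\]
so ``transmit from node 1'' dominates ``idle'' in~(\ref{eq:telescopic2}). But from $F_k(2)=0$ you only have $r^2_k \le \eta^2_{k+1}(m_2)-\eta^2_{k+1}(m_2-1)$, while your bound gives $\gamma_{k+1}(m_1,m_2)-\gamma_{k+1}(m_1,m_2-1)\le \eta^2_{k+1}(m_2)-\eta^2_{k+1}(m_2-1)$. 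Both $r^2_k$ and the joint node-2 gap lie below the same single-node threshold, so nothing follows about their relative order; you cannot conclude that ``node~2'' loses to ``idle,'' and you cannot compare ``node~1'' against ``node~2'' without controlling the cross-term $\gamma_{k+1}(m_1,m_2-1)-\gamma_{k+1}(m_1-1,m_2)$, which you flag as an obstacle but never bound. The same gap recurs in the $(1,1)$ case: submodularity places both $r^1_k$ and $r^2_k$ above their respective joint gaps, but deciding between the two transmission options in~(\ref{eq:telescopic2}) still hinges on the unproven cross-term estimate. The paper's interchange argument sidesteps all of this because it modifies $\mathcal{O}$ directly on a sample path and compares realized payoffs, never needing to relate marginal values across the two problems.
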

\begin{proof}
Claim: If $F_k(1)=F_k(2)=1$, then $\mathcal O$ also transmits from the node with the higher current channel gain. Proof by contradiction. Without loss of generality, at slot $k$, let $h_k^1 > h_k^2$.
Consider the two cases: i) $\mathcal O$ does not transmit any energy from any node at slot $k$. Then $\mathcal O$ is conserving energy at node $1$ for later use. Let it use that energy from node $1$ at slot $\ell> k$ (if it does not use that energy at all, then its wasted). But since $F_k(1)=1$, the optimal policy for just the single node $1$, there is no gain in shifting energy from slot $k$ to $\ell$ at node $1$, so $\mathcal O$ will also transmit from node $1$ at slot $k$.
 ii) $\mathcal O$ transmits unit energy from node $2$ that has lower channel gain. The essential idea remains the same that as above that if $\mathcal O$ transmits from node $2$, it is saving energy at node $1$ for later use, but since $F_k(1)=1$, it is better to use node $1$ now rather than later. Moreover, since $h_k^1 > h_k^2$, $\mathcal O$ will gain more by transmitting unit energy from node $1$ than node $2$ in slot $k$.

Otherwise, if $F_k(1)=1, F_k(2)=0$, then $\mathcal O$ transmits unit energy from node $1$ similar to $\mathcal S$. Argument is exactly as above.
\end{proof}

Lemma \ref{lem:subopt} shows that policies $\mathcal S$ and $\mathcal O$ could potentially differ only in slots where neither node transmits under $\mathcal S.$ Typically, such a scenario happens infrequently and thus the payoff obtained by $\mathcal S$ and $\mathcal O$ is expected to be similar (see Fig. \ref{fig:twosources}). The sub-optimal policy $\mathcal S$ can be easily extended for more than $2$ nodes and non-binary transmission, although obtaining analytical results seems difficult.

\section{Numerical Results} Fig.~\ref{fig:singlesource} plots the performance of the optimal policy for the single node case, with battery size $B=10$. The solid curve represents optimal payoff under uniform energy arrivals in [0:10], and the dashed curve corresponds to Bernoulli energy arrivals at rate $0.5,$ and binary transmission.

In Fig. \ref{fig:twosources}, we plot the optimal policy for the two nodes case with Bernoulli energy arrivals (at rate $0.5$ for each node), and binary transmission. In Fig. \ref{fig:twosources}, we also plot the performance of the sub-optimal policy $\mathcal S$ (Section \ref{sec:subopt}), and see that the performance of $\mathcal S$ is very close to the optimal policy as suggested by Lemma \ref{lem:subopt}.

\section{Concluding Remarks}
We presented exact optimal policies for maximizing utility over finite horizon in an energy harvesting system, assuming that energy arrival and expenditure are both discrete valued. Typically, finding closed form optimal policies is a hard problem. However, restricting ourselves to a discrete energy model allowed us to explicitly characterize the optimal policy. Indeed, we were able to express the optimal policy as a threshold policy where thresholds can be pre-determined using a recursive relation. Our methods could be applicable to other related problems as well.

\bibliographystyle{IEEEtran}
\bibliography{references}

\end{document}